\newtheorem{definition}{Definition}
\newtheorem{lemma}{Lemma}
\newtheorem{remark}{Remark}
\newtheorem{theorem}{Theorem}
\newtheorem{proof}{Proof}
\begin{document}
\begin{frontmatter}

\title{Fast Fixed-time Convergence in Nonlinear Dynamical Systems}%\thanksref{footnoteinfo}} 
% Title, preferably not more than 10 words.

%\thanks[footnoteinfo]{Sponsor and financial support acknowledgment
%goes here. Paper titles should be written in uppercase and lowercase
%letters, not all uppercase.}

\author[First]{Igor B. Furtat} 
%\author{Sergey A. Vrazhevsky} 
%\author[Third]{Third C. Author}

\address[First]{IPME RAS, St. Petersburg, Boljshoy pr. V.O., 61, Russia (e-mail: cainenash@mail.ru).}
%\address[Second]{IPME RAS, St. Petersburg, Boljshoy pr. V.O., 61, Russia (e-mail: cainenash@mail.ru).}
%\address[Third]{Electrical Engineering Department, 
%   Seoul National University, Seoul, Korea, (e-mail: author@snu.ac.kr)}

\begin{abstract}                % Abstract of 50--100 words
A fast convergence in a fixed-time of solutions of nonlinear dynamical systems, for which special requirements are satisfied on the derivative of a quadratic function calculated along the solutions of the system, is proposed. 
The conditions for the system solutions to converge to zero and to a given region within a fixed-time are obtained.
To achieve fast convergence, a negative power is applied to the derivative of a quadratic function within a specific time interval during the evolution of the system.
The application of the proposed results to the design of control laws for arbitrary order linear plants using the backstepping method is considered.
All the main results are accompanied by numerical modelling and a comparison of the proposed solutions with some existing ones.
\end{abstract}

\begin{keyword}
Convergence in fixed-time, control, backstepping method, Lyapunov function.
\end{keyword}

\end{frontmatter}
%===============================================================================

\section{Introduction}

The search for control laws that ensure solutions converge in a finite-time in the closed-loop system has attracted the attention of many researchers in the field of automatic control theory.
Such problems are relevant in the fields of aviation, robotics and the chemical industry, where completing a task in less than the critical time is important, see, for example, \cite{Bhat98,Ferrara98,Bartolini03,Chen25}.

The issue of finite-time stability was first discussed in the seminal publications \cite{Erugin51, Zubov64, Roxin65, Utkin92}, where, in particular, it was proposed to use piecewise constant control laws based on a sign function.
When these control laws are used in practice, high-frequency oscillations with large amplitude can arise in the control signal due to chattering. 
This is associated with the presence of noises in the measurement channel and delay, the use of approximate calculations and approximate sign function, etc.

A great deal of work has currently been done in developing control laws that stabilise a system in a finite-time. 
Algorithms based on twisting and supertwisting, as well as those based on the homogeneity properties of differential equations, have been proposed, for example, in \cite{Haimo86,Emelyanov86,Bhat00,Davila05,Moulay06,Orlov09,Bejarano10,Orlov11,Efimov11,Utkin13}, and these are just a few of them.
Here, continuous control laws are employed to stabilise the system and eliminate large-magnitude oscillations in control signal under chattering conditions.

If we denote by $T$ the time after which all solutions of a dynamic system are equal to zero and $x_0$ is the system initial condition, then in \cite{Haimo86,Emelyanov86,Utkin92,Bhat00,Davila05,Moulay06,Orlov09,Bejarano10,Orlov11,Efimov11,Utkin13} the condition of convergence in a finite-time is satisfied in the form $T(x_0) < \infty$.

In contrast to \cite{Haimo86,Emelyanov86,Utkin92,Bhat00,Davila05,Moulay06,Orlov09,Bejarano10,Orlov11,Efimov11,Utkin13}, in \cite{Polyakov12}, the proposed solutions guarantee the convergence to the equilibrium position in a fixed-time, i.e. there exists $T_{\max}>0$ such that for any $x_0$ we have $T(x_0) \leq T_{\max}$. 
Similar effects have been previously described in \cite{James90,Efimov06,Raff08,Cruz10}.

Currently, the fixed-time control method has become widely used. 
A comprehensive survey reviewing recent fixed-time stability theories and their applications in multi-agent systems, including consensus, formation, and containment control is presented in \cite{Zuo2022}. 
In \cite{Wang2022}, a fixed-time sliding mode control scheme for a flexible spacecraft, explicitly addressing vibration suppression and external disturbances is proposed.
In \cite{Chen2023}, a distributed fixed-time optimization algorithm for multi-agent systems that explicitly accounts for practical input saturation constraints. 
In \cite{Liu2023} a fixed-time control with fuzzy logic systems and barrier Lyapunov functions to handle unknown nonlinearities and ensure states never violate predefined constraints. 
In \cite{Li2024}, a co-design framework that combines fixed-time consensus control with a coding mechanism to secure multi-agent systems against deception attacks on communication channels is considered. 
In \cite{Basin2024}, a novel fixed-time integral sliding mode controller is designed for linear systems that eliminates the reaching phase and ensures robustness from the initial time instant. 
All of the above methods are based on the results of \cite{Polyakov12}, whereby fixed-time convergence is defined as in \cite{Polyakov12}. 
This paper will propose new conditions that will enable us to reduce the convergence time.

In this paper, novel conditions are proposed that guarantee fast convergence in a fixed-time, which allows to reduce the value of $T_{\max}$ compared to \cite{Polyakov12}.
Three lemmas with the fast convergence in a fixed-time to a given set and to zero are considered.
Then, the proposed results are applied to the design of the control law using the backstepping method, see \cite{Khalil00}.
All the obtained results are accompanied by numerical modelling and comparison of the proposed solutions with some existing ones.

The paper is organised as follows. 
Section \ref{Sec_Main_Result} considers preliminary definitions and lemma as well as novel three lemmas about fast convergence in a fixed-time to a given set and to zero. 
Also in this section, an example is presented numerical comparisons of the proposed result with some existing ones with linear control law, finite-time control laws, fixed-time control law, and hyperexponential control law.
Section \ref{Sec_Control} considers the application of the results of Section \ref{Sec_Main_Result} to design of a control law for arbitrary order linear systems. 
The example in Section \ref{Sec_Control} contains a comparison of the proposed control law with a linear one under different initial conditions and disturbances.
Section \ref{Concl} contains the main conclusions of the paper.

%%%%%%%%%%%%%%%%%%%%%%%%%%%%%%%%%%%%%%%%%%%%%%%%%%%%%%

\section{Preliminary Information and Some Further Extensions}
\label{Sec_Main_Result}

Consider the system
\begin{equation}
\label{eq_main}
\begin{array}{lll}
\dot{x}=f(t,x),~~~x(0)=x_0,
\end{array}
\end{equation}
where $t \geq 0$,
$x \in \mathbb R^n$ and $f: \mathbb [0; +\infty) \times \mathbb R^n \to \mathbb R^n$ is a nonlinear function that can be discontinuous.
The solutions of \eqref{eq_main} are understood in the sense of Filippov (\cite{Filippov85}).
Assume the origin is an equilibrium point of \eqref{eq_main}.

%%%%%%%%%%%%%%%%%%%%%%%%%%%%%%%%%%%%%%%%%%%%%%

\begin{definition}[\cite{Bhat00,Orlov09}]
The origin of \eqref{eq_main} is said to be globally finite-time stable if it is globally asymptotically stable and any solution $x(t,x_0)$ of \eqref{eq_main} reaches the equilibria at some finite-time, i.e., $x(t,x_0)=0$, $\forall t \geq T(x_0)$, where $T: \mathbb R^n \to [0; +\infty)$ is the settling-time function.
\end{definition}

\begin{definition}[\cite{Polyakov12}]
The origin of \eqref{eq_main} is said to be fixed-time stable if it is globally finite-time stable and settling-time function $T(x_0)$ is bounded, i.e. $\exists T_{\max}>0: T(x_0) \leq T_{\max}, \forall x_0 \in \mathbb R^n$.
\end{definition}

\begin{definition}[\cite{Orlov09}]
The set $\mathcal M$ is said to be globally finite-time attractive for \eqref{eq_main} if for any solution $x(t,x_0)$ of \eqref{eq_main} reaches $\mathcal M$ at some finite-time $t=T(x_0)$ and remains there $\forall t \geq T(x_0)$.
\end{definition}

\begin{definition}[\cite{Polyakov12}]
The set $\mathcal M$ is said to be fixed-time attractive for \eqref{eq_main} if it is globally finite-time attractive and settling-time function $T(x_0)$ is globally bounded by some number $T_{\max}>0$.
\end{definition}

\begin{lemma}[\cite{Polyakov12}]
\label{lem00}
If there exists a continuous radially unbounded positive definite function $V: \mathbb R^n \to [0; +\infty)$ such that
1) $V(x)=0 \Rightarrow x \in \mathcal M$ ; 2) any solution $x(t)$ of \eqref{eq_main} satisfies the inequaity 
\begin{equation}
\label{eq_deriv_Pol_01}
\begin{array}{lll}
\mathcal D^{+}  V(x(t)) \leq - \left[\alpha V^{p}(x(t)) + \chi V^{l}(x(t)) \right]^k
\end{array}
\end{equation}
 for some positive nubmers $\alpha$, $\chi$, $p$, $l$, $pk>1$, $lk<1$, then the set $\mathcal M \subset \mathbb R^n$ is globally fixed-time attractive for \eqref{eq_main} and 
 \begin{equation}
\label{eq_compare}
\begin{array}{lll}
T(x_0) \leq \frac{1}{\alpha^k(pk-1)} + \frac{1}{\chi^k(1-lk)}.
\end{array}
\end{equation} 
Here $\mathcal D^{+}$ denotes the upper right-hand derivative of the function $\phi(t)$: $\mathcal D^{+} \phi(t) := \limsup\limits_{h \to +0} \frac{\phi(t+h)-\phi(t)}{h}$.
\end{lemma}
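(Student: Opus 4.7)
The plan is to reduce the problem to a scalar differential inequality for $v(t) := V(x(t))$ and then control the trajectory in two regimes, separated by the level $v = 1$, using a standard comparison argument.

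First, I would record the obvious monotonicity estimate: for any $a,b \geq 0$ and any $k>0$, $(a+b)^k \geq a^k$ and $(a+b)^k \geq b^k$. Applied to $a = \alpha v^p$ and $b = \chi v^l$, the hypothesis (\ref{eq_deriv_Pol_01}) yields the two one-term inequalities $\mathcal{D}^{+} v(t) \leq -\alpha^{k} v(t)^{pk}$ and $\mathcal{D}^{+} v(t) \leq -\chi^{k} v(t)^{lk}$, both valid along the solution. Since $\mathcal{D}^{+} v \leq 0$ wherever $v>0$, the trajectory is nonincreasing; in particular, once $v$ crosses the level $1$ it stays below $1$ forever.

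Next I would handle the "far-field" phase $v \geq 1$ using the first bound. By the standard Filippov/Dini version of the comparison lemma, $v(t)$ is dominated by the solution $y(t)$ of $\dot{y} = -\alpha^{k} y^{pk}$, $y(0)=v(0)$. Because $pk>1$, separating variables gives $y(t)^{1-pk} = y(0)^{1-pk} + (pk-1)\alpha^{k} t$, and the time $t_{1}$ needed for $y$ to drop from its (possibly large) initial value down to $1$ satisfies
\begin{equation*}
t_{1} = \frac{1 - v(0)^{1-pk}}{(pk-1)\alpha^{k}} \leq \frac{1}{(pk-1)\alpha^{k}},
\end{equation*}
using $v(0)^{1-pk} \geq 0$ and $1-pk<0$. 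If $v(0) \leq 1$, then $v$ already lies in the "near-field" regime and this phase is vacuous. Either way, $v(t_{1}) \leq 1$ for some $t_{1} \leq ((pk-1)\alpha^{k})^{-1}$.

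For the "near-field" phase $v \leq 1$, I would apply the second bound $\mathcal{D}^{+} v \leq -\chi^{k} v^{lk}$ with $lk<1$. Comparison with $\dot{z} = -\chi^{k} z^{lk}$, $z(t_{1}) = v(t_{1}) \leq 1$, gives $z(t)^{1-lk} = z(t_{1})^{1-lk} - (1-lk)\chi^{k}(t-t_{1})$, which hits zero no later than $t = t_{1} + z(t_{1})^{1-lk}/((1-lk)\chi^{k}) \leq t_{1} + ((1-lk)\chi^{k})^{-1}$. By property~1 in the lemma, $v(t)=0$ forces $x(t) \in \mathcal{M}$, and since $v$ is nonincreasing the trajectory stays in $\mathcal{M}$ afterwards. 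Adding the two phase durations yields the uniform bound (\ref{eq_compare}). The main technical nuisance, rather than an obstacle, is justifying the comparison principle for the Dini derivative $\mathcal{D}^{+}$ and for solutions in the Filippov sense on each phase; this is standard (see, e.g., \cite{Filippov85}) once one works on the open set $\{v>0\}$ where the right-hand sides are locally Lipschitz.
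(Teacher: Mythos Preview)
Your argument is correct and is exactly the standard two-regime comparison argument for this lemma. Note that the paper does not actually give its own proof of this statement: Lemma~\ref{lem00} is quoted from \cite{Polyakov12} as preliminary material, so there is no in-paper proof to compare against. That said, your approach is precisely the template the paper reuses in its proofs of Lemmas~\ref{lem2}--\ref{lem3}: split according to whether $V>1$ or $V\leq 1$ (there with an additional intermediate threshold $w$), drop all but one term on the right-hand side to get a pure-power differential inequality, integrate explicitly, and bound the resulting time increment uniformly in the initial value. So your proof is both correct and fully in line with the methodology of the paper and of \cite{Polyakov12}.
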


%%%%%%%%%%%%%%%%%%%%%%%%%%%%%%%%%%%%%%%%%%%%%%

Now we describe the main idea of the present paper and formulate three lemmas that will allow us to reduce the upper bound \eqref{eq_compare}. 
Then, in Remark \ref{Rem1} and Example 1, we discuss the results of these lemmas in detail, also comparing them with existing ones.

%%%%%%%%%%%%%%%%%%%%%%%%%%%%%%%%%%%%%%%%%%%%%%

%\textcolor{blue}{
Consider the idea of this paper.
Denote by $x_i=x(t)|_{t=t_i}$, $i=0,1,2$, $t_0=0$.
Suppose that for the system \eqref{eq_main} it is possible to choose a radially unbounded positive-definite function $V: \mathbb R^n \to [0; +\infty)$ such that $\mathcal D^{+} V(x(t))<0$ for all $x$. 
Also, on the interval $V \in (V(x_1), V(x_2))$ we have $\mathcal D^{+} [\mathcal D^{+} V(x(t))] < 0$.
On the intervals $V \in (V(x_0), V(x_1)) \cup (V(x_2), +\infty)$ we have $\mathcal D^{+} [\mathcal D^{+} V(x(t))] > 0$, see Fig. \ref{Fig0}.
The presence of a function $V$ with upward convexity at $t \in (t_1,t_2)$ may indicate fast convergence of solutions \eqref{eq_main} to the equilibrium point.
Numerical examples will be considered below in this section.
Next, we will consider the application of this idea to the formulation of conditions for fast fixed-time convergence.
%}

\begin{figure}[h]
\center{\includegraphics[width=0.8\linewidth]{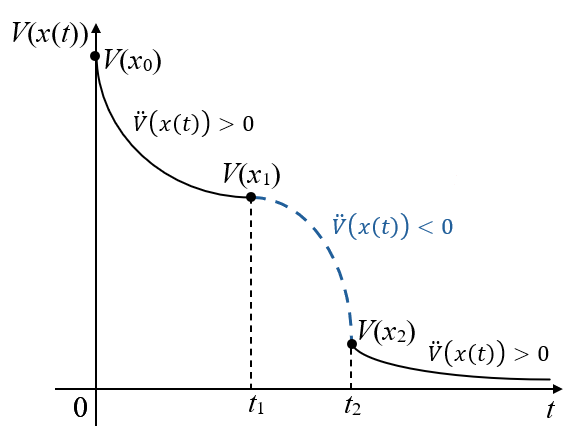}}
\caption{Example of alternating convexity upward and downward of the Lyapunov function $V(x(t))$.}
\label{Fig0}
\end{figure}

%%%%%%%%%%%%%%%%%%%%%%%%%%%%%%%%%%%%%%%%%%%%%%

%\textcolor{blue}{
\begin{lemma}
\label{lem2}
If there exists a continuous radially unbounded positive definite function $V: \mathbb R^n \to [0; +\infty)$ such that
any solution $x(t)$ of the system \eqref{eq_main} satisfies the inequality
\begin{equation}
\label{eq_deriv_LF_main_Lem2}
\begin{array}{lll}
\mathcal D^{+} V(x(t)) \leq 
\\
-\left[ \alpha V^{p}(x(t)) + \beta V^{-q \cdot 1(1-V) +(q+r) \cdot 1(w-V)}(x(t)) \right]^k,
\end{array}
\end{equation}
where $\alpha$, $\beta$, $p$, $q$, $w$, $r$, $k$ are positive numbers, $pk>1$ and $rk<1$,
then the system \eqref{eq_main} is globally fixed-time stable, where the upper bound of settling-time function for any initial conditions $x(0)$ is
\begin{equation}
\label{eq_fix_time_Lem2}
\begin{array}{lll}
T_{\max} := \frac{1}{\alpha^k(pk-1)} + \frac{1-w^{1+qk}}{\beta^k(1+qk)} + \frac{w^{1-rk}}{\beta^k(1-rk)}.
\end{array}
\end{equation}
Here $1(c-V)$ is Heaviside function: $1(c-V) := \begin{cases}
0, & V <c,\\
1, & V \geq c.
\end{cases}$
\end{lemma}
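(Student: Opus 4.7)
My approach is to exploit the piecewise structure of the right-hand side of \eqref{eq_deriv_LF_main_Lem2} produced by the two Heaviside factors, which naturally partitions the trajectory into three successive phases defined by the level of $V(x(t))$, and then to bound the duration of each phase by a scalar Dini-derivative comparison.

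First I would observe that the exponent of $V$ in the second bracketed term equals $0$ on the set $\{V > 1\}$, equals $-q$ on $\{w \le V \le 1\}$, and equals $r$ on $\{V < w\}$. Since the right-hand side of \eqref{eq_deriv_LF_main_Lem2} is non-positive, $V$ is non-increasing along solutions, and its lifetime decomposes into (at most) three consecutive phases during which $V$ descends from $V(x_0)$ down to $1$, then from $1$ to $w$, and finally from $w$ to $0$. If $V(x_0)$ already lies in one of the lower regimes, the earlier phases are simply skipped, which only tightens the bound. Within each phase I would apply the elementary inequality $(A+B)^{k} \ge A^{k}$ (valid for $A,B \ge 0$ and $k>0$) to the bracket, collapsing \eqref{eq_deriv_LF_main_Lem2} to one of the scalar inequalities $\mathcal D^{+} V \le -\alpha^{k} V^{pk}$, $\mathcal D^{+} V \le -\beta^{k} V^{-qk}$, or $\mathcal D^{+} V \le -\beta^{k} V^{rk}$, with the choice dictated by which summand governs the phase.

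Integrating each of these separable scalar inequalities, after invoking a Dini-type comparison principle analogous to the one underlying Lemma~\ref{lem00}, yields the three phase-time bounds
\begin{equation*}
T_1 \le \frac{1}{\alpha^{k}(pk-1)}, \quad T_2 \le \frac{1-w^{1+qk}}{\beta^{k}(1+qk)}, \quad T_3 \le \frac{w^{1-rk}}{\beta^{k}(1-rk)}.
\end{equation*}
The hypotheses $pk>1$ and $rk<1$ are precisely what is needed to keep the first integral finite and independent of $V(x_0)$ (this is what makes the bound uniform in the initial state) and the third integral finite. Summing the three contributions reproduces exactly \eqref{eq_fix_time_Lem2}, and radial unboundedness plus positive definiteness of $V$ upgrade $V(x(t)) \to 0$ to $x(t) \to 0$, establishing fixed-time stability with the stated $T_{\max}$.

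The step I expect to be most delicate is the rigorous handling of the Filippov framework at the jump surfaces $\{V=1\}$ and $\{V=w\}$, where the right-hand side of \eqref{eq_deriv_LF_main_Lem2} is discontinuous as a function of $V$. One must verify that $V\circ x$ is absolutely continuous, that its first hitting times of the two thresholds are well-defined, that the trajectory does not sustain sliding behaviour on either threshold level (so that it actually crosses through), and that the scalar comparison estimates can be concatenated cleanly across these crossings. Everything else reduces to elementary separable integration, but this bookkeeping is where the proof must be written carefully to avoid gaps.
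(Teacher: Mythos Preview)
Your proposal is correct and follows essentially the same three-phase decomposition and comparison argument as the paper: the paper also splits into the regimes $V>1$, $w<V\le 1$, $V\le w$, drops all but one bracketed summand in each regime via $(A+B)^k\ge A^k$, integrates the resulting separable scalar inequalities, and sums the three phase durations to obtain \eqref{eq_fix_time_Lem2}. Your explicit flagging of the Filippov bookkeeping at the threshold levels $\{V=1\}$ and $\{V=w\}$ is a point the paper passes over silently, so in that respect your plan is slightly more careful than the published proof.
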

%}

%%%%%%%%%%%%%%%%%%%%%%%%%%%%%%%%%%%%%%%%%%%%%%

%\textcolor{blue}{
\begin{remark}
It is clear from \eqref{eq_deriv_LF_main_Lem2} that the function $V(x(t))$ is monotonically decreasing and there exist coefficients $\alpha$, $\beta$, $p$, $q$, $w$, $k$ such that the alternation of upward and downward convexity of the function $V(x(t))$ corresponds to Fig. \ref{Fig0}.
\end{remark}
%}

%%%%%%%%%%%%%%%%%%%%%%%%%%%%%%%%%%%%%%%%%%%%%%

\begin{proof} 
In the proof we consider three cases: $V(x(t))>1$, $w < V(x(t)) \leq 1$ and $V(x(t)) \leq w$.

%%%%%%%%%%%%%%%%%%%%%

First, consider the case when $V(x(t)) >1$.
The expression \eqref{eq_deriv_LF_main_Lem2} can be estimated as
\begin{equation}
\label{eq_deriv1_LF_1_Lem2}
\begin{array}{lll}
\mathcal D^{+} V(x(t)) \leq - \alpha^{k} V^{pk}(x(t)).
\end{array}
\end{equation}
Find the solution of the inequality \eqref{eq_deriv1_LF_1_Lem2} in the form
\begin{equation}
\label{eq_deriv1_LF_2_Lem2}
\begin{array}{lll}
V(x(t)) \leq \left[ V^{1-pk}(x(0))- \alpha^{k}(1-pk)t \right]^{\frac{1}{1-pk}}.
\end{array}
\end{equation}
It follows from \eqref{eq_deriv1_LF_2_Lem2} that for any $x(t)$ such that $V(x(0))>1$ the condition $V(x(t)) \leq 1$ is satisfied at
\begin{equation}
\label{eq_deriv1_LF_2_time_Lem2}
\begin{array}{lll}
t \geq \frac{1}{\alpha^k(pk-1)} \left[ V^{1-pk}(x(0))- V^{1-pk}(x(t)) \right].
\end{array}
\end{equation}
Since $0 < V^{1-pk}(x(t)) \leq 1$ for any $V(x(t)) >1$ and $pk>1$, then the expression \eqref{eq_deriv1_LF_2_time_Lem2} can be estimated as
$t \geq \frac{1}{\alpha^k(pk-1)}$.

%%%%%%%%%%%%%%%%%%%%%

Consider the case when $w < V(x(t)) \leq 1$.
Inequality \eqref{eq_deriv_LF_main_Lem2} can be estimated as
\begin{equation}
\label{eq_deriv2_LF_1_Lem2}
\begin{array}{lll}
\mathcal D^{+} V(x(t)) \leq - \beta^k V^{-q k} (x(t)).
\end{array}
\end{equation}

For any $t_{01} \geq 0$ such that $w < V(x(t_{01})) \leq 1$, we find a solution of the inequality \eqref{eq_deriv2_LF_1_Lem2} in the form
\begin{equation}
\label{eq_deriv2_LF_2_Lem2_a}
\begin{array}{lll}
V(x(t)) \leq 
\\
\left[ V^{1+qk}(x(0)) - \beta^{k}(1+qk) (t-t_{01}) \right]^{\frac{1}{1+qk}}.
\end{array}
\end{equation}
It follows from \eqref{eq_deriv2_LF_2_Lem2_a} that for any $x(t)$ such that $w < V(x(t_{01})) \leq 1$ the inequality $w < V(x(t)) \leq 1$ is satisfied at
\begin{equation}
\label{eq_deriv2_LF_2_time_Lem2}
\begin{array}{lll}
t \geq t_{01} 
\\
+ \frac{1}{\beta^k(1+qk)} \left[ V^{1+qk}(x(t)) - V^{1+qk}(x(t_{01})) \right].
\end{array}
\end{equation}
Since $w < V(x(t)) \leq 1$ and $qk>0$, then the expression \eqref{eq_deriv2_LF_2_time_Lem2} can be rewritten as
$t \geq T_1 := \frac{1}{\alpha^k(pk-1)} + \frac{1-w^{1+qk}}{\beta^k(1+qk)}$.
Therefore, the value of $V(x(t))=w$ is achieved for any $t \geq T_{1}$.

%%%%%%%%%%%%%%%%%%%%%

Now consider the case when $V(x(t)) \leq w$.
Expression \eqref{eq_deriv_LF_main_Lem2} can be estimated as
\begin{equation}
\label{eq_deriv2_LF_1_Lem2_2}
\begin{array}{lll}
\mathcal D^{+} V(x(t)) \leq - \beta^k V^{r k} (x(t)).
\end{array}
\end{equation}

For any $t_{02} \geq 0$ such that $V(x(t_{02})) \leq w$, we find a solution of the inequality \eqref{eq_deriv2_LF_1_Lem2_2} in the form
\begin{equation}
\label{eq_deriv2_LF_2_Lem2}
\begin{array}{lll}
V(x(t)) \leq 
\\
\left[ V^{1-rk}(x(t_{02})) - \beta^{k}(1-rk) (t-t_{02}) \right]^{\frac{1}{1-rk}}.
\end{array}
\end{equation}
It follows from \eqref{eq_deriv2_LF_2_Lem2} that for any $x(t)$ such that $V(x(t_{02})) \leq w$ the value of $V(x(t)) =0$ is reached at
\begin{equation}
\label{eq_deriv2_LF_3_time_Lem2}
\begin{array}{lll}
t \geq t_{02} 
\\
+ \frac{1}{\beta^k(1-rk)} \left[ V^{1-rk}(x(t)) - V^{1-rk}(x(t_{02})) \right].
\end{array}
\end{equation}
Since $V(x(t)) \leq w$ and $rk<1$, then the expression \eqref{eq_deriv2_LF_3_time_Lem2} can be rewritten as
$t \geq \frac{1}{\alpha^k(pk-1)} + \frac{1-w^{1+qk}}{\beta^k(1+qk)} + \frac{w^{1-rk}}{\beta^k(1-rk)}$.
Therefore, the condition $x(t)=0$ is guaranteed to be satisfied at $t \geq T_{\max}$ for any initial conditions. 
Lemma \ref{lem2} is proved.

\end{proof}

%%%%%%%%%%%%%%%%%%%%%%%%%%%%%%%%%%%%%%%%%%%%%%
%%%%%%%%%%%%%%%%%%%%%%%%%%%%%%%%%%%%%%%%%%%%%%
%%%%%%%%%%%%%%%%%%%%%%%%%%%%%%%%%%%%%%%%%%%%%%

%\textcolor{blue}{
\begin{lemma}
\label{lem1}
If there exists a continuous radially unbounded positive definite function $V=V(x(t)): \mathbb R^n \to [0; +\infty)$ such that
any solution $x(t)$ of the system \eqref{eq_main} satisfies the inequality
\begin{equation}
\label{eq_deriv_LF_main}
\begin{array}{lll}
\mathcal D^{+} V \leq
\begin{cases}
-\left[ \alpha V^{p} + \frac{\beta}{V^{q}} e^{-\frac{\gamma}{V^r}} + \chi V^{l}\right]^k, & V > 0,\\
0, & V = 0,
\end{cases}
\end{array}
\end{equation}
where $\alpha$, $\beta$, $\gamma$, $\chi$, $p$, $q$, $r$, $l$, $k$ are positive numbers, $pk>1$, and $lk<1$,
then the system \eqref{eq_main} is globally fixed-time stable, where the upper bound on settling-time function for any initial conditions $x(0)$ is
\begin{equation}
\label{eq_fix_time_1}
\begin{array}{lll}
T_{\max} := & \frac{1}{\alpha^k(pk-1)} + \frac{e^{\frac{k \gamma}{w^r}}(1-w^{1+qk})}{\beta^k(1+qk)}  
\\
&
+\frac{w^{1-lk}}{\chi^k(1-lk)}.
\end{array}
\end{equation}
Here $w \in (0;1]$.
\end{lemma}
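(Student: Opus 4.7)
My plan is to mirror the three-regime argument used in the proof of Lemma~\ref{lem2}, partitioning the evolution of $V(x(t))$ at the thresholds $V=1$ and $V=w$, and on each subinterval keeping only the summand in the bracket of \eqref{eq_deriv_LF_main} that dominates there, before integrating the resulting scalar differential inequality.

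In the phase $V(x(t))>1$ all three summands are non-negative, so $\bigl[\alpha V^p+\beta V^{-q}e^{-\gamma/V^r}+\chi V^l\bigr]^k\geq \alpha^k V^{pk}$ and hence $\mathcal D^{+}V\leq -\alpha^k V^{pk}$. Because $pk>1$, this is the classical fixed-time-to-unit-sublevel estimate; the same calculation as \eqref{eq_deriv1_LF_2_Lem2}--\eqref{eq_deriv1_LF_2_time_Lem2} shows that $V$ enters $\{V\leq 1\}$ by time $\tfrac{1}{\alpha^k(pk-1)}$ independently of $V(x(0))$.

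For $w<V(x(t))\leq 1$ I would keep only the exponential term, giving $\mathcal D^{+}V\leq -\beta^k V^{-qk}e^{-k\gamma/V^r}$. This is the one step with a genuinely new flavour compared with Lemma~\ref{lem2} and I expect it to be the main technical obstacle, since the factor $e^{-k\gamma/V^r}$ is state-dependent. The cleanest way to obtain a closed-form time estimate is to replace the exponential by its minimum on the interval: $V\geq w$ implies $V^r\geq w^r$, hence $e^{-k\gamma/V^r}\geq e^{-k\gamma/w^r}$, yielding the weaker but integrable inequality $\mathcal D^{+}V\leq -\beta^k e^{-k\gamma/w^r}V^{-qk}$. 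Separating variables and integrating from the instant $V=1$ to the instant $V=w$ produces the increment $\tfrac{e^{k\gamma/w^r}(1-w^{1+qk})}{\beta^k(1+qk)}$, which is precisely the second summand of $T_{\max}$ in \eqref{eq_fix_time_1}.

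In the final phase $V(x(t))\leq w$ I would retain only the $\chi V^l$ term, obtaining $\mathcal D^{+}V\leq -\chi^k V^{lk}$ with $lk<1$; a standard finite-time comparison integration from $V=w$ down to $V=0$ contributes $\tfrac{w^{1-lk}}{\chi^k(1-lk)}$, and summing the three bounds gives \eqref{eq_fix_time_1} uniformly in $V(x(0))$, which establishes global fixed-time stability. The free parameter $w\in(0;1]$ plays the role of the splitting level and, as in Lemma~\ref{lem2}, may be optimised afterwards to sharpen the overall estimate.
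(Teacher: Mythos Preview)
Your proposal is correct and coincides with the paper's own proof: the paper also splits into the three regimes $V>1$, $w<V\leq 1$, $V\leq w$, keeps only the dominant summand in each, and in the middle regime uses exactly your bound $e^{-k\gamma/V^r}\geq e^{-k\gamma/w^r}$ to obtain an integrable inequality. The resulting time increments match term-by-term with \eqref{eq_fix_time_1}.
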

%}

%%%%%%%%%%%%%%%%%%%%%%%%%%%%%%%%%%%%%%%%%%%%%%%

%\textcolor{blue}{
\begin{remark}
It is clear from \eqref{eq_deriv_LF_main} that $\mathcal D^{+} V(x(t)) < 0$ for all $x \neq 0$.
Therefore, the function $V(x(t))$ is monotonically decreasing.
Moreover, there exist coefficients $\alpha$, $\beta$, $\gamma$, $\chi$, $p$, $q$, $r$, $l$, and $k$ such that the alternation of upward and downward convexity of the function $V(x(t))$ will correspond to Fig. \ref{Fig0}.
\end{remark}
%}

%%%%%%%%%%%%%%%%%%%%%%%%%%%%%%%%%%%%%%%%%%%%%%%%%%%%%%%%%%%%%%%%%%%%%%%%%%%%%%%%%

%%%%%%%%%%%%%%%%%%%%%%%%%%%%%%%%%%%%%%%%%%%%%%

\begin{proof}
In the proof we consider three cases: $V(x(t))>1$, $w < V(x(t)) \leq 1$ and $V(x(t)) \leq w$.
The proof of the case for $V(x(t)) >1$ is similar to the proof of lemma \ref{lem2}.

%%%%%%%%%%%%%%%%%%%%%%%%%%%%%%%%%%%%%%%%%%%%%%

%Now we determine the value of $w$ at which the condition $\frac{\beta}{V^{q}} e^{-\frac{\gamma}{V^r}} \geq \chi V^{l}$ is satisfied on the interval . 
%Simplify the left side of the inequality in the form $e^{-\frac{\gamma}{V^r}} \leq a V$, where $a$ is determined from the condition of contact of the line $a V$ to the function $e^{-\frac{\gamma}{V^r}}$ at one point on the interval $V \in (0;1]$. 
%We have $a=r \left(\frac{r+2}{\gamma r e} \right)^{\frac{r+2}{r}}$. 
%Then $a \beta V^{1-q} \geq \chi V^l$, or $V = \left(\frac{\chi}{a \beta} \right)^{\frac{1}{q+l-1}}$.

%%%%%%%%%%%%%%%%%%%%%%%%%%%%%%%%%%%%%%%%%%%%%%

Consider the case when $w < V(x(t)) \leq 1$, where \eqref{eq_deriv_LF_main} can be estimated as
\begin{equation}
\label{eq_deriv2_LF_1}
\begin{array}{lll}
\mathcal D^{+} V(x(t)) \leq & - \frac{\beta^k}{V^{qk}(x(t))} e^{-\frac{k\gamma}{V^r(x(t))}} 
\\
& \leq - \frac{\beta^k}{V^{qk}(x(t))} e^{-\frac{k \gamma}{w^r}}.
\end{array}
\end{equation}

For any $t_{01} \geq 0$ such that $V(x(t_{01})) \leq 1$, we find a solution of the inequality \eqref{eq_deriv2_LF_1} in the form
\begin{equation}
\label{eq_deriv2_LF_2_lem2}
\begin{array}{lll}
V(x(t)) \leq 
\\
\left[ V^{1+qk}(x(t_{01}))- e^{-\frac{k \gamma}{w^r}} \beta^{k}(1+qk) (t-t_{01}) \right]^{\frac{1}{1+qk}}.
\end{array}
\end{equation}
It follows from \eqref{eq_deriv2_LF_2_lem2} that for any $x(t)$ such that $w < V(x(t_{01})) \leq 1$ the inequality $V(x(t)) \leq w$ is satisfied at
\begin{equation}
\label{eq_deriv2_LF_2_time_lem2}
\begin{array}{lll}
t \geq t_{01} + 
\\
\frac{e^{\frac{k \gamma}{w^r}}}{\beta^k(1+qk)} \left[ V^{1+qk}(x(t)) - V^{1+qk}(x(t_{01})) \right].
\end{array}
\end{equation}
Since $w < V(x(t)) \leq 1$ and $qk>0$, then the expression \eqref{eq_deriv2_LF_2_time_lem2} can be rewritten as
$t \geq T_1 := \frac{1}{\alpha^k(pk-1)} + \frac{e^{\frac{k \gamma}{w^r}}}{\beta^k(1+qk)}(1-w^{1+qk})$.
Therefore, the value of $V(x(t))=w$ is reached for any $t \geq T_1$.

%%%%%%%%%%%%%%%%%%%%%

Now consider the case when $V(x(t)) \leq w$.
Inequality \eqref{eq_deriv_LF_main} can be estimated as
\begin{equation}
\label{eq_deriv2_LF_1_2}
\begin{array}{lll}
\mathcal D^{+} V(x(t)) \leq - \chi^k V^{l k} (x(t)).
\end{array}
\end{equation}

For any $t_{02} \geq 0$ such that $V(x(t_{02})) \leq w$, we find a solution of the inequality \eqref{eq_deriv2_LF_1_2} in the form
\begin{equation}
\label{eq_deriv2_LF_2}
\begin{array}{lll}
V(x(t)) \leq 
\\
\left[ V^{1-lk}(x(t_{02})) - \chi^{k}(1-lk) (t-t_{02}) \right]^{\frac{1}{1-lk}}.
\end{array}
\end{equation}
It follows from \eqref{eq_deriv2_LF_2} that for any $x(t)$ such that $V(x(t_{02})) \leq w$ the value of $V(x(t)) =0$ is reached at
\begin{equation}
\label{eq_deriv2_LF_2_time}
\begin{array}{lll}
t \geq t_{02} + 
\\
\frac{1}{\chi^k(1-lk)} \left[ V^{1-lk}(x(t)) - V^{1-lk}(x(t_{02})) \right].
\end{array}
\end{equation}
Since $V(x(t)) \leq w$ and $lk<1$, then the expression \eqref{eq_deriv2_LF_2_time} can be rewritten as
$t \geq \frac{1}{\alpha^k(pk-1)} + \frac{e^{\frac{k \gamma}{w^r}}}{\beta^k(1+qk)}(1-w^{1+qk}) + \frac{w^{1-lk}}{\chi^k(1-lk)}$.
Therefore, the condition $x(t)=0$ is guaranteed to be satisfied at $t \geq T_{\max}$ for any initial conditions. 
Lemma \ref{lem1} is proved.

\end{proof}

%%%%%%%%%%%%%%%%%%%%%%%%%%%%%%%%%%%%%%%%%%%%%%
%%%%%%%%%%%%%%%%%%%%%%%%%%%%%%%%%%%%%%%%%%%%%%
%%%%%%%%%%%%%%%%%%%%%%%%%%%%%%%%%%%%%%%%%%%%%%

%\textcolor{blue}{
\begin{lemma}
\label{lem3}
If there exists a continuous radially unbounded positive definite function $V=V(x(t)): \mathbb R^n \to [0; +\infty)$ such that
any solution $x(t)$ of the system \eqref{eq_main} satisfies the inequality
\begin{equation}
\label{eq_deriv_LF_main_lem3}
\begin{array}{lll}
\mathcal D^{+} V \leq
\begin{cases}
-\left[ \alpha V^{p} + \frac{\beta}{V^{q}} e^{-\frac{\gamma}{V^r}} \right]^k, & V > 0,\\
0, & V = 0,
\end{cases}
\end{array}
\end{equation}
where $\alpha$, $\beta$, $\gamma$, $p$, $q$, $r$, $k$ are positive numbers and $pk>1$,
then the set
\begin{equation}
\label{eq_set_M_lem3}
\begin{array}{lll}
\mathcal M = \left\{x \in \mathbb R^n: |x| \leq V^{-1} (w), w \in (0, 1) \right\}
\end{array}
\end{equation}
is globally attractive with an upper bound on the convergence time
\begin{equation}
\label{eq_fix_time_1_lem3}
\begin{array}{lll}
T_{\max} := \frac{1}{\alpha^k(pk-1)} + \frac{e^{\frac{k \gamma}{w^r}}(1-w^{1+qk})}{\beta^k(1+qk)}
\end{array}
\end{equation}
for any initial conditions $x(0)$, and $\lim\limits_{t \to \infty} x(t) = 0$.
\end{lemma}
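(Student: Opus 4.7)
The plan is to mirror the proof of Lemma \ref{lem1} while truncating the analysis at the first moment $V(x(t)) \leq w$, since the third dissipative term $\chi V^{l}$ is absent from \eqref{eq_deriv_LF_main_lem3}. Accordingly, I would split the trajectory into just two regimes: $V(x(t)) > 1$ and $w < V(x(t)) \leq 1$, and bound the time spent in each.

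In the regime $V > 1$, I would drop the nonnegative term $\frac{\beta}{V^{q}}e^{-\gamma/V^{r}}$ from inside the bracket in \eqref{eq_deriv_LF_main_lem3} (legitimate because $k>0$ so removing a nonnegative summand only decreases the bracketed quantity) to obtain $\mathcal D^{+}V \leq -\alpha^{k}V^{pk}$. Since $pk>1$, integrating this Bernoulli-type inequality exactly as in \eqref{eq_deriv1_LF_1_Lem2}--\eqref{eq_deriv1_LF_2_time_Lem2} yields $V(x(t)) \leq 1$ after at most $\frac{1}{\alpha^{k}(pk-1)}$ time units. In the regime $w < V \leq 1$, I would drop the $\alpha V^{p}$ term instead, giving $\mathcal D^{+}V \leq -\frac{\beta^{k}}{V^{qk}} e^{-k\gamma/V^{r}}$; using the fact that $V \geq w$ and $r>0$ imply $1/V^{r} \leq 1/w^{r}$, this is further bounded by $-\frac{\beta^{k}}{V^{qk}} e^{-k\gamma/w^{r}}$, which has the same algebraic form as \eqref{eq_deriv2_LF_1} in the proof of Lemma \ref{lem1}. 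Repeating the integration performed in \eqref{eq_deriv2_LF_2_lem2}--\eqref{eq_deriv2_LF_2_time_lem2} then shows $V(x(t))$ crosses $w$ within an additional $\frac{e^{k\gamma/w^{r}}(1-w^{1+qk})}{\beta^{k}(1+qk)}$ time units. Summing the two bounds gives exactly \eqref{eq_fix_time_1_lem3}, and by the definition \eqref{eq_set_M_lem3} of $\mathcal M$ we conclude $x(t) \in \mathcal M$ for all $t \geq T_{\max}$.

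For the asymptotic claim $\lim_{t\to\infty} x(t) = 0$, I would observe that the right-hand side of \eqref{eq_deriv_LF_main_lem3} is strictly negative whenever $V>0$, so $V(x(t))$ is strictly monotonically decreasing on $(0,\infty)$. The global dominating inequality $\mathcal D^{+}V \leq -\alpha^{k}V^{pk}$ (valid for all $V>0$, again by dropping the nonnegative exponential term) produces the explicit comparison bound $V(x(t)) \leq [V^{1-pk}(x(t^{*})) + \alpha^{k}(pk-1)(t-t^{*})]^{-1/(pk-1)}$ for any reference time $t^{*}$, which tends to $0$ as $t\to\infty$ because $pk-1>0$. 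Since $V$ is continuous, positive definite and radially unbounded, $V(x(t)) \to 0$ forces $x(t) \to 0$.

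The only subtle point, and the main obstacle, is the second-case bound: one must flag that replacing $e^{-k\gamma/V^{r}}$ by the constant $e^{-k\gamma/w^{r}}$ is only valid on the sublevel annulus $\{w < V \leq 1\}$, and therefore the resulting time estimate refers to the time spent by the trajectory inside this annulus rather than a global decay rate. Everything else is straightforward comparison/integration of scalar differential inequalities of the types already handled in Lemmas \ref{lem2} and \ref{lem1}, and no new analytic machinery is needed.
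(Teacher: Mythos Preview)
Your argument for the fixed-time attractivity of $\mathcal M$ is essentially identical to the paper's: the paper also splits into the regimes $V>1$ and $w<V\le 1$ and simply refers back to the computations in the proof of Lemma~\ref{lem1} to obtain \eqref{eq_fix_time_1_lem3}, exactly as you propose. The paper then adds a one-line observation (which you have implicitly via monotonicity) that once $V\le w$ the trajectory cannot leave $\mathcal M$.

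The only substantive difference is in the asymptotic claim $x(t)\to 0$. The paper does not use your global comparison $\mathcal D^{+}V\le -\alpha^{k}V^{pk}$; instead it argues that $V$ is monotone decreasing and invokes L'H\^{o}pital's rule on $\frac{\beta}{V^{q}}e^{-\gamma/V^{r}}$ to conclude $\lim_{|x|\to 0}\mathcal D^{+}V=0$, from which it infers $x(t)\to 0$. Your route is more direct and fully quantitative: dropping the nonnegative exponential term gives an explicit decay bound $V(x(t))\le\bigl[V^{1-pk}(x(t^{*}))+\alpha^{k}(pk-1)(t-t^{*})\bigr]^{-1/(pk-1)}\to 0$, which immediately forces $x(t)\to 0$ by positive definiteness of $V$. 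This avoids the L'H\^{o}pital step entirely and makes the asymptotic conclusion a transparent consequence of the same Bernoulli-type inequality already used in the first regime; the paper's version, by contrast, leaves the passage from ``$\mathcal D^{+}V\to 0$ near the origin'' to ``$x(t)\to 0$'' somewhat implicit.
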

%}

%%%%%%%%%%%%%%%%%%%%%%%%%%%%%%%%%%%%%%%%%%%%%%
%%%%%%%%%%%%%%%%%%%%%%%%%%%%%%%%%%%%%%%%%%%%%%

\begin{proof}
In the proof we consider three cases: $V(x(t))>1$, $w < V(x(t)) \leq 1$ and $V(x(t)) \leq w$, where the proof of the cases for $V(x(t)) >1$ and $w < V(x(t)) \leq 1$ is similar to the proof of lemma \ref{lem1}. 
Also, the value of \eqref{eq_fix_time_1_lem3} is obtained in the proof of lemma \ref{lem1}.

%%%%%%%%%%%%%%%%%%%%%%%%%%%%%%%%%%%%%%%%%%%%%%

Now consider the domain $V(x(t)) \leq w$.
It follows from the conditions $V: \mathbb R^n \to [0;+\infty)$ and \eqref{eq_deriv_LF_main_lem3} that the function $V$ is monotonically decreasing.
Consequently, the solutions \eqref{eq_deriv_LF_main_lem3}, once having entered the domain \eqref{eq_set_M_lem3}, never leave it.
Additionally, applying L'H$\hat{\textup{o}}$pital's (also known as Bernoulli's) rule several times for the term $\frac{\beta}{V^{q}(x(t))} e^{-\frac{\gamma}{V^r(x(t))}}$ in \eqref{eq_deriv_LF_main_lem3}, one has $\lim\limits_{|x| \to 0}\mathcal D^{+} V(x(t)) =0$. 
Hence, $\lim\limits_{t \to \infty}x(t)=0$.
Lemma \ref{lem3} is proved.

\end{proof}

%%%%%%%%%%%%%%%%%%%%%%%%%%%%%%%%%%%%%%%%%%%%%%
%%%%%%%%%%%%%%%%%%%%%%%%%%%%%%%%%%%%%%%%%%%%%%
%%%%%%%%%%%%%%%%%%%%%%%%%%%%%%%%%%%%%%%%%%%%%%

\begin{remark}
\label{Rem1}
Let us consider the difference between the proposed result and \cite{Polyakov12}.
In \cite{Polyakov12}, the convergence time estimate is \eqref{eq_compare}.
%\begin{equation}
%\label{eq_compare}
%\begin{array}{lll}
%T_{\max} := \frac{1}{\alpha^k(pk-1)} + \frac{1}{\chi^k(1-lk)}.
%\end{array}
%\end{equation}
In \eqref{eq_compare}, the value of the second term is limited by the choice of $l$ and $k$ so that $lk<1$.
This indicates a limitation on the convergence rate in the region $V(x(t))<1$ for a fixed $\chi$.
Compared with \eqref{eq_compare}, the estimate \eqref{eq_fix_time_1} is reduced by the second term, where the denominator can be made sufficiently large by appropriately choosing $q$ and $k$ for a fixed $\beta$.
Moreover, by increasing $p$, $k$ and $r$, the estimate $T_{\max}$ can be made arbitrarily small.

Unlike \eqref{eq_compare} the estimate \eqref{eq_fix_time_Lem2} is reduced by the second term, as in \eqref{eq_fix_time_1}.
Also, compared with the second term in \eqref{eq_compare}, the numerator of the third term in \eqref{eq_fix_time_Lem2} contains the value $w$, which can be chosen sufficiently small.
\end{remark}

%%%%%%%%%%%%%%%%%%%%%%%%%%%%%%%%%%%%%%%%%%%%%%

%\begin{remark}
%\label{Rem2}
%Выражение \eqref{eq_deriv_LF_main_Lem2} содержит не дифференцируемую правую часть, что может вызвать трудности при синтезе закона управления. 
%Поэтому выражение \eqref{eq_deriv_LF_main_Lem2} можно заменить, например, на $\mathcal D^{+} V \leq -[\alpha V^{p + V} + \beta V^{q-V}]^k$, 
%где при $V > 1$ имеем $\mathcal D^{+} V \leq -\alpha V^{k(p + V)}$, 
%при $q < V \leq 1$ имеем $\mathcal D^{+} V \leq -\alpha V^{k(q - V)}$ с показателем $q-V<0$ и 
%при $0 < V \leq q$ имеем $\mathcal D^{+} V \leq -\alpha V^{k(q - V)}$ с показателем $q-V>0$ при $V < q$. 
%Однако в этом случае оценка на время сходимости не совпадет с \eqref{eq_fix_time_Lem2}.
%\end{remark}

%%%%%%%%%%%%%%%%%%%%%%%%%%%%%%%%%%%%%%%%%%%%%%

\textit{Example 1.} 
Consider a first-order integrator
\begin{equation}
\label{eq_plant}
\begin{array}{lll}
%\dot{x}(t)=u(t)+d(t),
\dot{x}(t)=u(t),
\end{array}
\end{equation}
where $x \in \mathbb R$, $u \in \mathbb R$ is a control.

To illustrate the obtained results, we consider seven different control laws.
The linear control law
\begin{equation}
\label{eq_CL_1}
\begin{array}{lll}
u=-x
\end{array}
\end{equation}
guarantees exponential convergence to zero of the solutions of the closed-loop system \eqref{eq_plant}, \eqref{eq_CL_1}. 
By choosing $V=0.5x^2$ and taking into account \eqref{eq_plant}, \eqref{eq_CL_1}, we obtain $\dot{V}=-x^2=-2V<0$ and $\ddot{V}=4V>0$ for all $x \neq 0$.
Thus, the function $V$ does not change the direction of convexity downwards for all $x$.

According to \cite{Utkin92}, the relay control law
\begin{equation}
\label{eq_CL_2}
\begin{array}{lll}
u=- sign\{x\}
\end{array}
\end{equation}
guarantees the convergence of the solutions of the closed-loop system \eqref{eq_plant}, \eqref{eq_CL_2} to zero in a finite-time. 
By choosing $V=0.5x^2$ and considering \eqref{eq_plant}, \eqref{eq_CL_2}, one has $\dot{V}=-|x|=-\sqrt{2}V^{0.5}<0$ and $\ddot{V}=1>0$ for all $x \neq 0$.
Thus, the function $V$ does not change the direction of convexity downwards for all $x$.

Following \cite{Haimo86,Emelyanov86,Bhat00,Davila05,Moulay06,Orlov09,Bejarano10,Orlov11,Efimov11,Utkin13}, the continuous control law
\begin{equation}
\label{eq_CL_3}
\begin{array}{lll}
u=- x^{1/3}
\end{array}
\end{equation}
guarantees the convergence of the solutions of the closed-loop system \eqref{eq_plant}, \eqref{eq_CL_3} to zero in finite-time. 
By choosing $V=0.5x^2$ and taking into account \eqref{eq_plant}, \eqref{eq_CL_3}, we have $\dot{V}=-|x|^{4/3}=-2^{2/3}V^{2/3}<0$ and $\ddot{V}=\frac{2^{7/3}}{3}V^{1/3}>0$ for all $x \neq 0$.
Thus, the function $V$ does not change the direction of convexity downwards for all $x$.

According to \cite{Polyakov12}, the continuous control law
\begin{equation}
\label{eq_CL_4}
\begin{array}{lll}
u=- x^3-x^{1/3}
\end{array}
\end{equation}
guarantees the convergence of the solutions of the closed-loop system \eqref{eq_plant}, \eqref{eq_CL_4} to zero in a fixed-time $T_{\max}=2$ according to \eqref{eq_compare}. 
By choosing $V=0.5x^2$ and considering \eqref{eq_plant}, \eqref{eq_CL_4}, we obtain $\dot{V}=-x^4-x^{4/3}=-4V^2-2^{2/3}V^{2/3}<0$ and $\ddot{V}=(-8V-\frac{2^{5/3}}{3}V^{-1/3})\dot{V}>0$ for all $x \neq 0$.
Thus, the function $V$ does not change the direction of convexity downwards for all $x$.

Following \cite{Nekhoroshikh23}, the control law
\begin{equation}
\label{eq_CL_5}
u= \begin{cases}
-x(1+|\ln|x||), & x \neq 0,\\
0, & x=0
\end{cases}
\end{equation}
guarantees hyperexponential convergence of solutions of the closed-loop system \eqref{eq_plant}, \eqref{eq_CL_5} to zero as $t \to +\infty$. 
By choosing $V=0.5x^2$ and taking into account \eqref{eq_plant}, \eqref{eq_CL_5}, one gets $\dot{V}=-2V(1+|\ln(\sqrt{2}V^{0.5})|)<0$ and $\ddot{V}=-2(1+0.5sign\{\ln(\sqrt{2}V^{0.5})\}+|\ln(\sqrt{2}V^{0.5})|)\dot{V}>0$ for all $x \neq 0$.
Thus, the function $V$ does not change the direction of convexity downwards for all $x$.

The control law
\begin{equation}
\label{eq_CL_7}
\begin{array}{lll}
u=-x^{3}-|x|^{-3 \cdot 1(1-|x|) + (3+1/3) \cdot 1(0.01-|x|)} sign\{x\}
\end{array}
\end{equation}
satisfies the condition of Lemma \ref{lem2} with $V=0.5x^2$, i.e., guarantees the convergence of the solutions of the closed-loop system \eqref{eq_plant}, \eqref{eq_CL_7} to zero at a fixed-time $T_{\max}=0.82$ according to \eqref{eq_fix_time_Lem2} with $w=0.01$. 
Taking into account \eqref{eq_plant} and \eqref{eq_CL_7}, we obtain $\dot{V}=-x^4-|x|^{-3 \cdot 1(1-|x|) + (3+1/3) \cdot 1(0.01-|x|)+1}$.
Let us consider three intervals.
On the first interval $x > 1$ we have $\dot{V}=-x^4-|x|=-4V^2-\sqrt{2} V^{0.5} < 0$ and $\ddot{V}=(-8V-2^{-0.5}V^{-0.5}) \dot{V} > 0$.
On the second interval $x \in (0.01,1)$ we have $\dot{V}=-x^4-|x|^{-2}=-4V^2-0.5V^{-1}<0$ and $\ddot{V}=(-8V+\frac{1}{2V^2})\dot{V}$, where $\ddot{V}>0$ for $x \in (0.01,2^{-1/6})$ and $\ddot{V}<0$ for $x \in (2^{-1/6},1)$.
On the third interval $x \in (0,0.01)$ we have $\dot{V}=-x^4-|x|^{4/3}=-4V^2-2^{2/3}V^{2/3}<0$ and $\ddot{V}=(-8V-\frac{2^{5/3}}{3}V^{-1/3})\dot{V}>0$.
Thus, the function $V$ changes the direction of convexity, as in Fig. \ref{Fig0}.

The control law
\begin{equation}
\label{eq_CL_6}
\begin{array}{lll}
u=
\begin{cases}
-x^{3}-\frac{1}{x^{3}}e^{-\frac{0.05}{|x|^{0.5}}}-x^{1/3}, & x \neq 0, \\
0, & x = 0
\end{cases}
\end{array}
\end{equation}
satisfies the condition of Lemma \ref{lem1} with $V=0.5x^2$, i.e. guarantees the convergence of solutions of the closed-loop system \eqref{eq_plant}, \eqref{eq_CL_6} to zero at a fixed-time $T_{\max}=0.97$ with $w=0.02$. 
Taking into account \eqref{eq_plant} and \eqref{eq_CL_6}, we obtain $\dot{V}=-x^4-\frac{1}{x^2}e^{-\frac{0.05}{|x|^{0.5}}}-x^{4/3}=
-4V^2-\frac{1}{2V}e^{-\frac{0.05}{2^{0.25}V^{0.25}}}-\frac{2}{3}V^{2/3}$ and $\ddot{V}=(-8V+\frac{1}{2V^2}e^{-\frac{0.05}{2^{0.25}V^{0.25}}}-\frac{1}{160 2^0.25}V^{9/4}e^{-\frac{0.05}{2^{0.25}V^{0.25}}}-2^{2/3}\frac{2}{3}V^{-1/3})\dot{V}$.
We have $\dot{V}<0$ for any $x \neq 0$.
On the intervals $V \in (0, 1.22 \cdot 10^{-8}) \cup (0.33, +\infty)$ we have $\ddot{V}>0$ and on the interval $V \in (1.22 \cdot 10^{-8},0.33)$ we have $\ddot{V}<0$.
Thus, the function $V$ changes the direction of convexity, as in Fig. \ref{Fig0}.

The control law
\begin{equation}
\label{eq_CL_8}
\begin{array}{lll}
u=
\begin{cases}
-x^{3}-\frac{1}{x^{3}}e^{-\frac{0.05}{|x|^{0.5}}}, & x \neq 0, \\
0, & x = 0
\end{cases}
\end{array}
\end{equation}
satisfies the condition of Lemma \ref{lem3} with $V=0.5x^2$, i.e. guarantees the asymptotic convergence of solutions of the closed-loop system \eqref{eq_plant}, \eqref{eq_CL_8} to zero, as well as the entry of solutions into the domain $\mathcal M = \left\{x \in \mathbb R: |x| \leq w:=0.01 \right\}$ in a fixed-time $T_{\max}=0.91$ according to \eqref{eq_fix_time_1_lem3}. 
Taking into account \eqref{eq_plant} and \eqref{eq_CL_6}, we obtain $\dot{V}=-4V^2-\frac{1}{2V}e^{-\frac{0.5}{2^{0.25}V^{0.25}}}<0$ (for any $x \neq 0$) and
$\ddot{V}= \Big( 8V + \frac{1}{2V^2}e^{-\frac{0.5}{2^{0.25}V^{0.25}}} - \frac{1}{160 2^{0.25}V^{9/4}}e^{-\frac{0.5}{2^{0.25}V^{0.25}}} \Big)\dot{V}$, where $\ddot{V}>0$ on intervals $V \in (0,1.22 \cdot 10^{-8}) \cup (0.39, +\infty)$ and $\ddot{V}<0$ on the interval $V \in (1.22 \cdot 10^{-8}, 0.39)$.
Thus, the function $V$ changes the direction of convexity, as in Fig. \ref{Fig0}.

In Fig.~\ref{Fig_1_solutions_a}, the graphs of solutions using the control laws \eqref{eq_CL_1}-\eqref{eq_CL_8} for $x(0)=3$ are shown.
From Fig.~\ref{Fig_1_solutions_a}, it is evident that in the region $x<1$, that the convergence towards the equilibrium point of the proposed solutions \eqref{eq_CL_7}-\eqref{eq_CL_8} is faster compared to the control laws \eqref{eq_CL_1}-\eqref{eq_CL_5}.
In Fig.~\ref{Fig_1_solutions_b}, the convergence of solutions \eqref{eq_plant}, \eqref{eq_CL_7}-\eqref{eq_CL_8} in a fixed-time under initial conditions  $x(0)=1$, $x(0)=10$, $x(0)=10^2$, and $x(0)=10^3$ is demonstrated.

%\begin{minipage}{6in}
%  \centering
%  \includegraphics[align=c,height=1.25in]{fig1.pdf}
%  \hspace*{.2in}
%  \includegraphics[align=c,height=1in]{fig2.pdf}
%\end{minipage}

\begin{figure}[h]
\center{\includegraphics[width=0.8\linewidth]{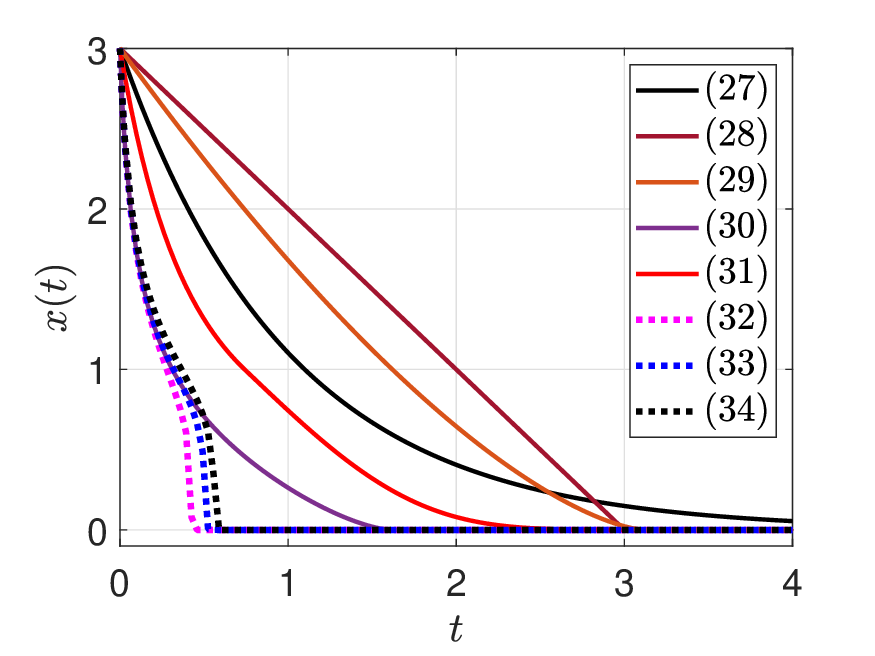}}
\caption{The solutions of \eqref{eq_plant} using the control laws \eqref{eq_CL_1}-\eqref{eq_CL_8} for $x(0)=3$.}
\label{Fig_1_solutions_a}
\end{figure}

\begin{figure}[h]
\center{\includegraphics[width=0.8\linewidth]{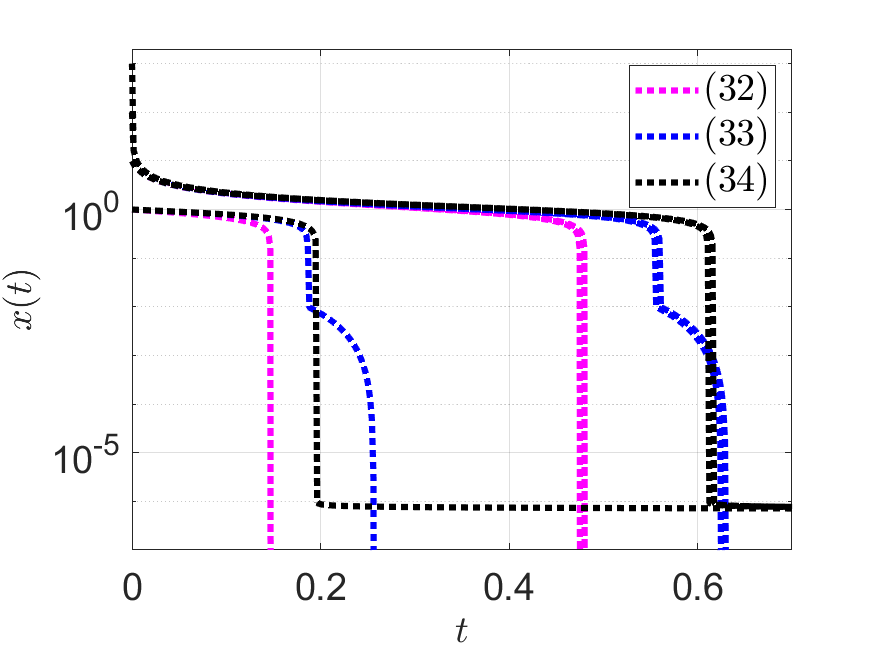}}
\caption{The solutions of \eqref{eq_plant} with the control laws \eqref{eq_CL_7}-\eqref{eq_CL_8} under initial conditions $x(0)=1$, $x(0)=10$, $x(0)=10^2$, and $x(0)=10^3$.}
\label{Fig_1_solutions_b}
\end{figure}

%%%%%%%%%%%%%%%%%%%%%%%%%%%%%%%%%%%%%%%%%%%%%%%%%%
%%%%%%%%%%%%%%%%%%%%%%%%%%%%%%%%%%%%%%%%%%%%%%%%%%
%%%%%%%%%%%%%%%%%%%%%%%%%%%%%%%%%%%%%%%%%%%%%%%%%%

\section{Control Law Design}
\label{Sec_Control}

Consider the linear plant model in the form
\begin{equation}
\label{eq_lin_plant}
\begin{array}{lll}
\dot{x}_i=x_{i+1},~i=1,...,n-1,
\\
\dot{x}_n=\sum_{j=1}^{n} a_j x_j +u,
\end{array}
\end{equation}
where $x_i \in \mathbb R$,
$u \in \mathbb R$ is a control,
$a_j$ are known numbers.
For the system \eqref{eq_lin_plant} the controllability conditions are satisfied.

The goal is to stabilize the system \eqref{eq_lin_plant} using the results of lemmas \ref{lem1} and \ref{lem3}.
To design the control law, we use the backstepping method, see \cite{Khalil00}.
Divide the control law design into $n$ steps.

%%%%%%%%%%%%%%%%%%%%%%%%%%

\textit{Step 1.} Introduce the first virtual control law for the first equation in \eqref{eq_lin_plant} in the form
\begin{equation}
\label{eq_lin_CL_1}
\begin{array}{lll}
v_1=x_{2}=\varphi_1(x_1),
\end{array}
\end{equation}
where $\varphi_1(x_1)=
\begin{cases}
-\alpha_1 x_1^{p_1}-\frac{\beta_1}{x_1^{q_1}} e^{-\frac{\gamma_1}{x_1^{r_1}}}, & |x_1|>0, \\
0, & x_1=0,
\end{cases}
$ 
$\alpha_1>0$, $\beta_1>0$, $\gamma_1 >0$, the numbers $p_1$ and $q_1$ are odd, the number $r_1>0$ is such that the function $x_1^{r_1}$ is even.

%%%%%%%%%%%%%%%%%%%%%%%%%%

%\textit{Step 2.}
%Consider the error $e_1=x_2-v_1$.
%Differentiating it w.r.t time, one obtains
%\begin{equation}
%\label{eq_lin_CL_er1}
%\begin{array}{lll}
%\dot{e}_1=x_{3}-\dot{\varphi}_1(x_1).
%\end{array}
%\end{equation}
%
%Introduce the second virtual control law for the equation \eqref{eq_lin_CL_er1} in the form
%\begin{equation}
%\label{eq_lin_CL_2_a}
%\begin{array}{lll}
%v_2=x_{3}=\varphi_2(e_1) + \dot{\varphi}_1(x_1).
%\end{array}
%\end{equation}
%where $\varphi_2(e_1)=
%\begin{cases}
%-\alpha_2 e_1^{p_2}-\frac{\beta_2}{e_1^{q_2}} e^{-\frac{\gamma_2}{e_1^{r_2}}}, & |e_1|>0, \\
%0, & e_1=0,
%\end{cases}$
%$\alpha_2>0$, $\beta_2>0$, $\gamma_2 >0$, the numbers $p_2$ and $q_2$ are odd, the number $r_2>0$ is such that the function $e_1^{r_2}$ is even.
%
%\textcolor{blue}{
%Consider the error $e_2=x_3-v_2$.
%Differentiating it w.r.t., one gets
%\begin{equation}
%\label{eq_lin_CL_er2}
%\begin{array}{lll}
%\dot{e}_2=x_{4}-\dot{\varphi}_2(e_1)-\ddot{\varphi}_1(x_1).
%\end{array}
%\end{equation}
%}

%%%%%%%%%%%%%%%%%%%%%%%%%%

\textit{Step $i$, $i=2,...,n-1$.}
Consider the error $e_{i-1}=x_{i}-v_{i-1}$.
Differentiating it w.r.t. time, one gets
\begin{equation}
\label{eq_lin_CL_er_i}
\begin{array}{lll}
\dot{e}_{i-1} = & x_{i+1} - \dot{\varphi}_{i-1}(e_{i-2}) - \ddot{\varphi}_{i-2}(e_{i-3}) - ...
\\
& - \varphi^{(i-2)}_2(e_1) - \varphi^{(i-1)}_1(x_1), ~~~ i=2,...,n-1.
\end{array}
\end{equation}

Introduce the virtual control laws for the equations \eqref{eq_lin_CL_er_i} in the forms
\begin{equation}
\label{eq_lin_CL_2}
\begin{array}{lll}
v_i = & x_{i+1}=\varphi_{i}(e_{i-1}) + \dot{\varphi}_{i-1}(e_{i-2}) + \ddot{\varphi}_{i-2}(e_{i-3}) 
\\
& ... + \varphi^{(i-2)}_2(e_1) + \varphi^{(i-1)}_1(x_1), 
\\
& i=2,...,n-1,
\end{array}
\end{equation}
where $\varphi_i(e_{i-1})=
\begin{cases}
-\alpha_i e_{i-1}^{p_i}-\frac{\beta_i}{e_{i-1}^{q_i}} e^{-\frac{\gamma_i}{e_{i-1}^{r_i}}}, & |e_{i-1}|>0, \\
0, & e_{i-1}=0,
\end{cases}$
$\alpha_i>0$, $\beta_i>0$, $\gamma_i >0$, numbers $p_i$ and $q_i$ are odd, number $r_i>0$ is such that function $e_{i-1}^{r_i}$ is even.

%%%%%%%%%%%%%%%%%%%%%%%%%%

\textit{Step $n$.} Consider the error $e_{n-1}=x_{n}-v_{n-1}$.
Differentiating it w.r.t. time, one has
\begin{equation}
\label{eq_lin_CL_er_n}
\begin{array}{lll}
\dot{e}_{n-1}= & u - \dot{\varphi}_{n-1}(e_{n-2}) 
\\
& ... - \varphi^{(n-2)}_{2}(e_1) - \varphi^{(n-1)}_1(x_1).
\end{array}
\end{equation}

Introduce the control law for the equation \eqref{eq_lin_CL_er_n} in the form
\begin{equation}
\label{eq_lin_CL_n}
\begin{array}{lll}
u= & \varphi_n(e_{n-1}) + \dot{\varphi}_{n-1}(e_{n-2}) 
\\
&... + \varphi^{(n-2)}_2(e_1) + \varphi^{(n-1)}_1(x_1) - \sum_{j=1}^{n} a_j x_j,
\end{array}
\end{equation}
where 
\begin{equation*}
\begin{array}{lll}
\varphi_n(e_{n-1})=
\begin{cases}
-\alpha_n e_{n-1}^{p_n}-\frac{\beta_n}{e_{n-1}^{q_n}} e^{-\frac{\gamma_n}{e_{n-1}^{r_n}}} 
\\
- \chi |e_{n-1}|^{l} sign\{e_{n-1}\}, & |e_{n-1}|>0, \\
0, & e_{n-1}=0,
\end{cases}
\end{array}
\end{equation*}
$\alpha_n>0$, $\beta_n>0$, $\gamma_n >0$, $\chi>0$, $0<l<1$, the numbers $p_n$ and $q_n$ are odd, the number $r_n>0$ is such that the function $e_{n-1}^{r_n}$ is even.

%%%%%%%%%%%%%%%%%%%%%%%%%%
%%%%%%%%%%%%%%%%%%%%%%%%%%
%%%%%%%%%%%%%%%%%%%%%%%%%%

\begin{theorem}
\label{Th1}
The virtual control laws \eqref{eq_lin_CL_1}, \eqref{eq_lin_CL_2} together with the control law \eqref{eq_lin_CL_n} guarantee the convergence of $x_1$, $e_i$, $i=1,...,n-2$ in the domains
\begin{equation}
\label{eq_set_M_all}
\begin{array}{lll}
\mathcal M_{x_1} = & \left\{x_1 \in \mathbb R: |x_1| \leq w_0, w_0 \in (0,1) \right\},
\\
\mathcal M_{e_i} = & \left\{e_i \in \mathbb R: |e_i| \leq w_i, w_i \in (0,1) \right\}, 
\\
& i=1,...,n-2,
\end{array}
\end{equation}
in fixed-time 
\begin{equation}
\label{eq_set_T_all}
\begin{array}{lll}
T_{\max}^{x_1} = \frac{\sqrt{2}}{\alpha_1 (p_1+1)}+\frac{2\sqrt{2} e^{\frac{\sqrt{2} \gamma_1}{w^{r_1}}}(1-w^{0.5(1+q_1)})}{\beta_1 (1+q_1)},
\\
T_{\max}^{e_i} = \frac{\sqrt{2}}{\alpha_{i+1} (p_{i+1}+1)}
\\
+\frac{2\sqrt{2} e^{\frac{\sqrt{2} \gamma_{i+1}}{w^{r_{i+1}}}}(1-w^{0.5(1+q_{i+1})})}{\beta_1 (1+q_{i+1})}, ~~~ i=1,...,n-2,
\end{array}
\end{equation}
correspondingly and asymptotic convergence of $x_i$, $i=1,...,n$ to zero. 
Also, $e_{n-1} = 0 $ after 
$t \geq T_{\max}^{e_{n-1}}:= \frac{\sqrt{2}}{\alpha_{n} (p_{n}+1)}+\frac{2\sqrt{2} e^{\frac{\sqrt{2} \gamma_{n}}{w^{r_{n}}}}(1-w^{0.5(1+q_{n})})}{\beta_1 (1+q_{n})}+\frac{\sqrt{2} w^{0.5(1-l)}}{\chi (1-l)}$.

\end{theorem}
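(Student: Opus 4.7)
The strategy is to transform the closed-loop system into error coordinates and argue in reverse cascade, applying Lemma \ref{lem1} to the last (autonomous) error equation and Lemma \ref{lem3} to each of the remaining equations in turn.

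First, with the shorthand $e_0 := x_1$ and $e_i := x_{i+1} - v_i$ for $i = 1, \ldots, n-1$, a direct computation based on the telescoping structure of \eqref{eq_lin_CL_2} together with the cancellation of $\dot v_{n-1}$ and $\sum_j a_j x_j$ in \eqref{eq_lin_CL_n} yields the pure-feedback form
\begin{equation*}
\dot{e}_0 = \varphi_1(e_0) + e_1, \quad \dot{e}_i = \varphi_{i+1}(e_i) + e_{i+1}, \ i = 1,\ldots, n-2, \quad \dot{e}_{n-1} = \varphi_n(e_{n-1}).
\end{equation*}
The last equation is autonomous, which is the entry point of the sequential argument.

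Next, I would apply Lemma \ref{lem1} to $e_{n-1}$ with $V_{n-1} = 0.5 e_{n-1}^2$. The parity conditions on $p_n, q_n$ and the evenness of $e_{n-1}^{r_n}$ make every term of $e_{n-1}\varphi_n(e_{n-1})$ non-positive and expressible as a power of $V_{n-1}$, so that $\dot{V}_{n-1}$ fits the form \eqref{eq_deriv_LF_main} with $k=1$ after absorbing the $2^{\cdot}$ factors into the constants $\alpha, \beta, \gamma, \chi$. Lemma \ref{lem1} then delivers $e_{n-1}(t) \equiv 0$ for $t \geq T_{\max}^{e_{n-1}}$. For such $t$ the $e_{n-2}$-equation becomes autonomous, $\dot{e}_{n-2} = \varphi_{n-1}(e_{n-2})$; the identical Lyapunov computation with $V_{n-2} = 0.5 e_{n-2}^2$ now matches Lemma \ref{lem3} (since $\varphi_{n-1}$ carries no $\chi V^l$ term), yielding fixed-time entry into $\mathcal M_{e_{n-2}}$ in an additional $T_{\max}^{e_{n-2}}$ and asymptotic decay of $e_{n-2}$ to 0. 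Iterating this step backwards through $e_{n-3}, \ldots, e_1$ and finally $e_0 = x_1$ reproduces all of \eqref{eq_set_T_all}, and asymptotic convergence of each $x_i$ to 0 follows from the continuous invertibility of the map $x \mapsto (e_0, \ldots, e_{n-1})$.

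The main obstacle is that Lemma \ref{lem3} only guarantees asymptotic, not finite-time, convergence of the upstream errors, so the coupling term $e_{i+1}$ in the $\dot{e}_i$-equation never vanishes exactly and the reverse cascade cannot be strictly decoupled past the last stage. The cleanest way to close the argument is to rely on the positive invariance of each $\mathcal M_{e_{i+1}}$ (ensured by the strict decrease of $V_{i+1}$ guaranteed in \eqref{eq_deriv_LF_main_lem3}) and to treat the residual $e_{i+1} \in \mathcal M_{e_{i+1}}$ as a vanishing perturbation whose norm is controlled by the free design parameter $w_{i+1} \in (0,1)$; choosing the $w_{i+1}$ small enough keeps the stabilising terms in $\varphi_i$ dominant and preserves the bounds $T_{\max}^{e_i}$ as the extra time consumed after all downstream errors have reached their respective sets.
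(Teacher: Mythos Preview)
Your Lyapunov functions $V_i=\tfrac12 e_i^2$ and the assignment of Lemma~\ref{lem1} to the last error and Lemma~\ref{lem3} to the remaining ones are exactly what the paper uses, and the resulting time bounds coincide with \eqref{eq_set_T_all}. (The paper's proof text actually cites the two lemmas the other way round at each step, but the conclusions it draws match your assignment.)

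The real difference is in how the coupling is handled. The paper does not run a reverse cascade. It proceeds forward, and at Step~$i$ it simply substitutes the virtual control $v_i$ for $x_{i+1}$ in \eqref{eq_lin_CL_er_i} as if it were an identity, writing $\dot e_{i-1}=\varphi_i(e_{i-1})$ outright and applying the lemma to that autonomous scalar equation. In other words, the paper never records the $+e_i$ term in the $\dot e_{i-1}$-equation; each error subsystem is analysed in isolation as though the downstream design were already exact. The obstacle you flag---that Lemma~\ref{lem3} yields only asymptotic decay, so the upstream equations are never strictly autonomous past the $(n-1)$-th stage---is therefore not something the paper's argument circumvents; it simply does not raise it. Your vanishing-perturbation remark (treat $e_{i+1}\in\mathcal M_{e_{i+1}}$ as a small disturbance, with $w_{i+1}$ a free design parameter) is the more honest closure of the same gap; the paper offers no such refinement.
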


%%%%%%%%%%%%%%%%%%%%%%%%%%
%%%%%%%%%%%%%%%%%%%%%%%%%%
%%%%%%%%%%%%%%%%%%%%%%%%%%

\begin{proof} Divide the proof into $n$ steps.

%%%%%%%%%%%%%%%%%%%%%%%%%%%

\textit{Step 1.} Substituting \eqref{eq_lin_CL_1} into the first equation \eqref{eq_lin_plant}, we obtain
\begin{equation}
\label{eq_lin_CLS_1}
\begin{array}{lll}
\dot{x}_1 = \varphi_1(x_1).
\end{array}
\end{equation}

To analyze the stability of \eqref{eq_lin_CLS_1}, introduce the Lyapunov function
\begin{equation}
\label{eq_lin_FL_1}
\begin{array}{lll}
V_0=0.5 x_1^2.
\end{array}
\end{equation}
Differentiating \eqref{eq_lin_FL_1} w.r.t. time along the trajectories of \eqref{eq_lin_CLS_1}, one has
\begin{equation}
\label{eq_lin_dot_FL_1}
\begin{array}{lll}
\dot{V}_0= -\alpha_1 x_1^{p_1+1}-\frac{\beta_1}{x_1^{q_1-1}} e^{-\frac{\gamma_1}{x_1^{r_1}}}.
\end{array}
\end{equation}

It follows from \eqref{eq_lin_FL_1} that $|x_1|=\sqrt{2 V_1}$.
Then \eqref{eq_lin_dot_FL_1} can be rewritten as
\begin{equation*}
\label{eq_lin_dot_FL_12}
\begin{array}{lll}
\dot{V}_0= -\alpha_1 \sqrt{2} V_0^{\frac{p_1+1}{2}}-\frac{\beta_1}{\sqrt{2} V_0^{\frac{q_1-1}{2}}} e^{-\frac{\gamma_1}{\sqrt{2} V_0^{0.5r_1}}}.
\end{array}
\end{equation*}
Thus, the solutions $x_1(t)$ satisfy the condition of the lemma \ref{lem1}.
Hence, the trajectories $x_1(t)$ converge to the region $\mathcal M_{x_1}$ in a fixed-time (see the first expression in \eqref{eq_set_T_all}) and asymptotically tend to zero.
Additionally, $\lim\limits_{t \to +\infty} v_1(t) = 0$ from \eqref{eq_lin_CL_1}.

%%%%%%%%%%%%%%%%%%%%%%%%%%%

%\textit{Step 2.} Substituting \eqref{eq_lin_CL_er1} into \eqref{eq_lin_CL_2}, we obtain
%\begin{equation}
%\label{eq_lin_CLS_2}
%\begin{array}{lll}
%\dot{e}_1=\varphi_2(e_1).
%\end{array}
%\end{equation}
%
%To analyze the stability of solutions \eqref{eq_lin_CLS_2}, we introduce the Lyapunov function
%\begin{equation}
%\label{eq_lin_FL_2}
%\begin{array}{lll}
%V_1=0.5 e_1^2.
%\end{array}
%\end{equation}
%Differentiating \eqref{eq_lin_FL_2} with respect to time along the trajectories \eqref{eq_lin_CLS_2}, we obtain
%\begin{equation}
%\label{eq_lin_dot_FL_2}
%\begin{array}{lll}
%\dot{V}_1= -\alpha_2 e_1^{p_2+1}-\frac{\beta_2}{e_1^{q_2-1}} e^{-\frac{\gamma_2}{e_1^{r_2}}}.
%\end{array}
%\end{equation}
%
%From \eqref{eq_lin_FL_2} it follows that $|e_1|=\sqrt{2 V_2}$.
%Then \eqref{eq_lin_dot_FL_1} can be rewritten as
%\begin{equation}
%\label{eq_lin_dot_FL_22}
%\begin{array}{lll}
%\dot{V}_1= -\alpha_2 \sqrt{2} V_1^{\frac{p_2+1}{2}}-\frac{\beta_2}{\sqrt{2} V_1^{\frac{q_2-1}{2}}} e^{-\frac{\gamma_2}{\sqrt{2} V_1^{0.5r_2}}}.
%\end{array}
%\end{equation}
%Thus, the solutions $e_{i-1}(t)$ satisfy the condition of lemma \ref{lem1}.

%%%%%%%%%%%%%%%%%%%%%%%%%%%

\textit{Step $i=2,...,n-1$.} Substituting \eqref{eq_lin_CL_2} into \eqref{eq_lin_CL_er_i}, we obtain
\begin{equation}
\label{eq_lin_CLS_2}
\begin{array}{lll}
\dot{e}_{i-1}=\varphi_{i}(e_{i-1}).
\end{array}
\end{equation}

Introduce the Lyapunov function
\begin{equation}
\label{eq_lin_FL_2}
\begin{array}{lll}
V_{i-1}=0.5 e_{i-1}^2.
\end{array}
\end{equation}
Differentiating \eqref{eq_lin_FL_2} w.r.t. time along the trajectories \eqref{eq_lin_CLS_2}, one has
\begin{equation}
\label{eq_lin_dot_FL_2}
\begin{array}{lll}
\dot{V}_{i-1}= -\alpha_i e_{i-1}^{p_{i}+1}-\frac{\beta_{i}}{e_{i-1}^{q_{i}-1}} e^{-\frac{\gamma_{i}}{e_{i-1}^{r_{i}}}}.
\end{array}
\end{equation}

It follows from \eqref{eq_lin_FL_2} that $|e_{i-1}|=\sqrt{2 V_{i-1}}$.
Then \eqref{eq_lin_dot_FL_2} can be rewritten as
\begin{equation*}
\label{eq_lin_dot_FL_22}
\begin{array}{lll}
\dot{V}_{i-1}= - \sqrt{2} \alpha_{i} V_{i-1}^{\frac{p_i+1}{2}}-\frac{\beta_i}{\sqrt{2} V_{i-1}^{\frac{q_i-1}{2}}} e^{-\frac{\gamma_i}{\sqrt{2} V_{i-1}^{0.5r_2}}}.
\end{array}
\end{equation*}
Thus, the solutions $e_{i-1}(t)$ satisfy the condition of lemma \ref{lem1}.
Hence, the trajectories $e_{i-1}(t)$ converge to the domain $\mathcal M_{e_{i-1}}$ in a fixed-time (see \eqref{eq_set_T_all}) and asymptotically tend to zero.
It follows from the structures of \eqref{eq_lin_CL_1}, \eqref{eq_lin_CL_2} that $\lim\limits_{t \to +\infty} x_i(t) = 0$, $\lim\limits_{t \to +\infty} v_j(t) = 0$, $j=2,...,n-1$.

%%%%%%%%%%%%%%%%%%%%%%%%%%%

\textit{Step $n$.} Substituting \eqref{eq_lin_CL_n} into \eqref{eq_lin_CL_er_n}, we obtain
\begin{equation}
\label{eq_lin_CLS_2_n}
\begin{array}{lll}
\dot{e}_{n-1}=\varphi_{n}(e_{n-1}).
\end{array}
\end{equation}

Consider the Lyapunov function
\begin{equation}
\label{eq_lin_FL_2_n}
\begin{array}{lll}
V_{n-1}=0.5 e_{n-1}^2.
\end{array}
\end{equation}
Differentiating \eqref{eq_lin_FL_2_n} w.r.t. time along the trajectories \eqref{eq_lin_CLS_2_n}, we have
\begin{equation}
\label{eq_lin_dot_FL_2_n}
\begin{array}{lll}
\dot{V}_{n-1}= -\alpha_n e_{n-1}^{p_{n}+1}-\frac{\beta_{n}}{e_{n-1}^{q_{n}-1}} e^{-\frac{\gamma_{n}}{e_{n-1}^{r_{n}}}} 
- \chi |e_{n-1}|^{l+1}.
\end{array}
\end{equation}

It follows from \eqref{eq_lin_FL_2_n} that $|e_{n-1}|=\sqrt{2 V_{n-1}}$.
Then \eqref{eq_lin_dot_FL_2_n} can be rewritten as
\begin{equation*}
\label{eq_lin_dot_FL_22_n}
\begin{array}{lll}
\dot{V}_{n-1} = & -\alpha_{n} \sqrt{2} V_{n-1}^{\frac{p_n+1}{2}}
\\
&-\frac{\beta_n}{\sqrt{2} V_{n-1}^{\frac{q_n-1}{2}}} e^{-\frac{\gamma_n}{\sqrt{2} V_{n-1}^{0.5r_n}}} - \sqrt{2} \chi V_{n-1}^{0.5(l+1)}.
\end{array}
\end{equation*}
Thus, the solutions $e_{n-1}(t)$ satisfy the condition of lemma \ref{lem3}.
Hence, the trajectories $e_{n-1}(t)$ converge to zero at a fixed-time $T_{\max}^{e_{n-1}}$.
It follows from the structure of the control law \eqref{eq_lin_CL_n}  that $\lim\limits_{t \to +\infty} x_n(t) = 0$ and $\lim\limits_{t \to +\infty} u(t) = 0$.
Theorem \ref{Th1} is proved.

\end{proof}

%%%%%%%%%%%%%%%%%%%%%%%%%%%

\textit{Example 2.}
Consider a second-order integrator in the form
\begin{equation}
\label{eq_lin_plant_ex1}
\begin{array}{lll}
\dot{x}_1=x_{2},
\\
\dot{x}_2=u.
\end{array}
\end{equation}

According to \eqref{eq_lin_CL_1} and \eqref{eq_lin_CL_n}, introduce the control law as follows
\begin{equation}
\label{eq_lin_plant_CL_ex11}
\begin{array}{lll}
u = & \begin{cases}
-\alpha_2 e_1^{p_2}-\frac{\beta_2}{e_1^{q_2}} e^{-\frac{\gamma_2}{e_1^{r_2}}}
\\
+ \dot{\varphi}_1(x_1) - \chi |e_1|^l sign\{e_1\}, & |e_1|>0, \\
0, & e_1=0,
\end{cases}
\\
e_1= & x_2-v_1,
\\
v_1 = & \begin{cases}
-\alpha_1 x_1^{p_1}-\frac{\beta_1}{x_1^{q_1}} e^{-\frac{\gamma_1}{x_1^{r_1}}}, & |x_1|>0, \\
0, & x_1=0,
\end{cases}
\\
\dot{\varphi}_1(x_1) = & \begin{cases}
-x_2 \Big( 
\alpha_1 p_1 x_1^{p_1-1} 
\\
- \frac{\beta_1 q_1}{x_1^{q_1+1}} e^{-\frac{\gamma_1}{x_1^{r_1}}} 
+ \frac{\beta_1 \gamma_1 r_1}{x_1^{q_1+r_1+1}} e^{-\frac{\gamma_1}{x_1^{r_1}}}
\Big), & |x_1|>0, \\
0, & x_1=0.
\end{cases}
\end{array}
\end{equation}

For comparison with the proposed control law \eqref{eq_lin_plant_CL_ex11} we introduce a linear control law
\begin{equation}
\label{eq_lin_plant_CL_ex12}
\begin{array}{lll}
u = Kx.
\end{array}
\end{equation}

Choose $\alpha_1=\alpha_2=\beta_1=\beta_2=\gamma_1=\gamma_2=\chi=1$, $p_1=p_2=q_1=q_2=3$, $r_1=r_2=0.8$, $l=1/3$ in \eqref{eq_lin_plant_CL_ex11} and $K=[-4~~-4]$ in \eqref{eq_lin_plant_CL_ex12}.

Fig.~\ref{Fig_21_control} shows the graphs of solutions of \eqref{eq_lin_plant_ex1} using the control laws \eqref{eq_lin_plant_CL_ex11}, \eqref{eq_lin_plant_CL_ex12} for $x(0)=col\{2,2\}$.
Fig.~\ref{Fig_22_control} shows the graphs of the solutions for the initial conditions $x(0)=col\{20,20\}$ increased by 10 times.
Fig.~\ref{Fig_23_control} shows the graphs of the solutions for $x(0)=col\{2,2\}$ and an additive disturbance in the second equation \eqref{eq_lin_plant_ex1} in the form $\dot{x}_2=u+20 \sin(t)$.
It is evident from the figures that the proposed control law are more robust with respect to the initial conditions and external disturbances than a linear control law.

\begin{figure}[h]
\center{\includegraphics[width=0.8\linewidth]{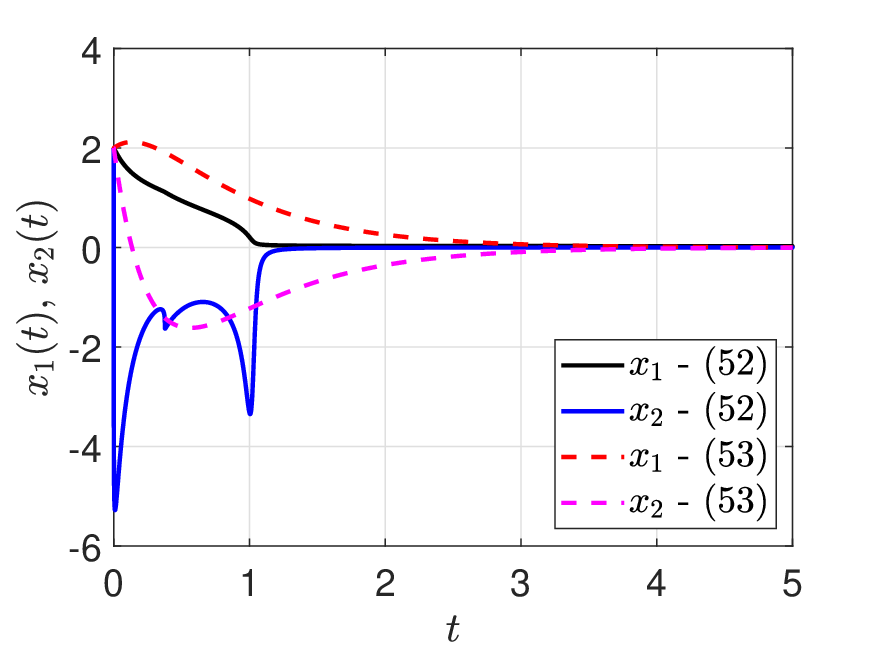}}
\caption{The graphs of solutions of \eqref{eq_lin_plant_ex1} using control laws \eqref{eq_lin_plant_CL_ex11} and \eqref{eq_lin_plant_CL_ex12} for $x(0)=col\{2,2\}$.}
\label{Fig_21_control}
\end{figure}

\begin{figure}[h]
\center{\includegraphics[width=0.8\linewidth]{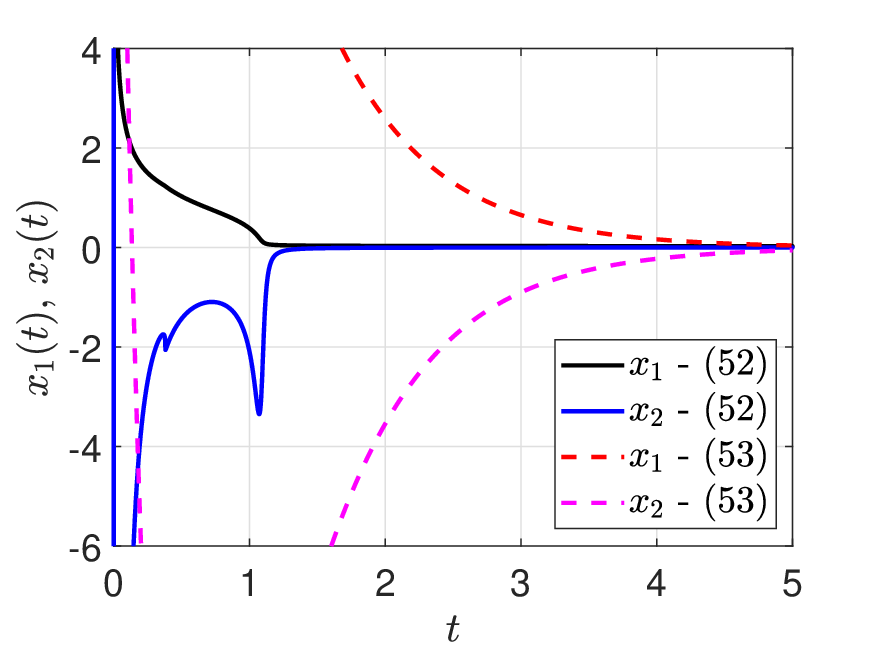}}
\caption{The graphs of solutions of \eqref{eq_lin_plant_ex1} using control laws \eqref{eq_lin_plant_CL_ex11} and \eqref{eq_lin_plant_CL_ex12} for $x(0)=col\{20,20\}$.}
\label{Fig_22_control}
\end{figure}

\begin{figure}[h]
\center{\includegraphics[width=0.8\linewidth]{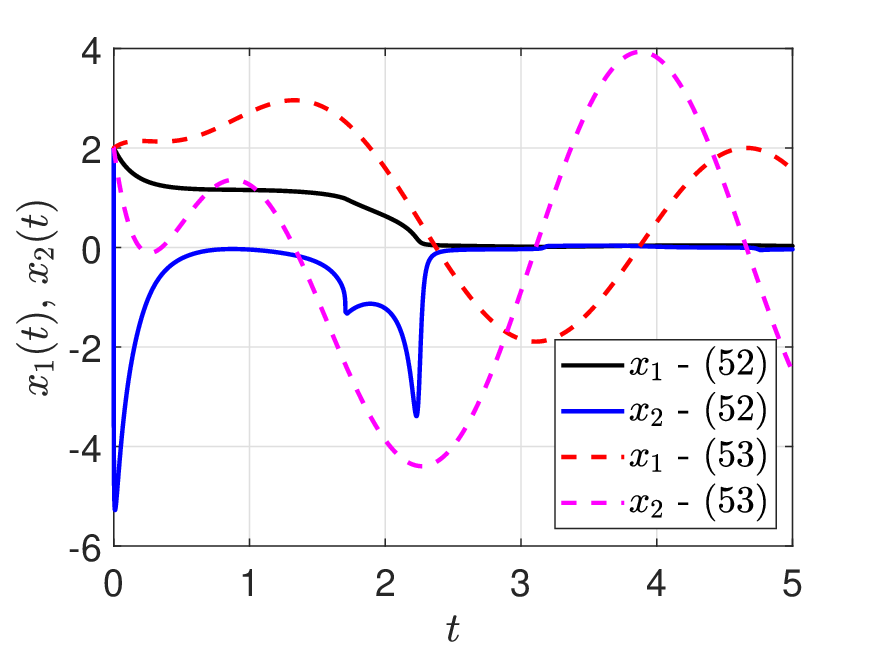}}
\caption{The graphs of solutions of \eqref{eq_lin_plant_ex1} using control laws \eqref{eq_lin_plant_CL_ex11} and \eqref{eq_lin_plant_CL_ex12} for $x(0)=col\{2,2\}$ and additive disturbance $20\sin(2t)$.}
\label{Fig_23_control}
\end{figure}

%%%%%%%%%%%%%%%%%%%%%%%%%%%%%%%%%%%%%%%%%%%%%

\section{Conclusions}
\label{Concl}

Three lemmas for fast convergence at a fixed-time of solutions of nonlinear dynamical systems are proposed for which special conditions are satisfied on the derivative of the quadratic function calculated along the system solutions.
Two lemmas are about the convergence of solutions to zero at a fixed-time, additional lemma is about convergence of solutions in a given region.
To achieve fast convergence, a negative power is applied to the derivative of the quadratic function within a specific domain of system evolution.
The application of the proposed results to the design of the control law for arbitrary order linear plants using the backstepping method is considered.
All the proposed results demonstrated fast convergence to zero compared to some existing control schemes.

%%%%%%%%%%%%%%%%%%%%%%%%%%%%%%%%%%%%%%%%%%%%%%%%%%%%%%%%%%%%%%%
%%%%%%%%%%%%%%%%%%%%%%%%%%%%%%%%%%%%%%%%%%%%%%%%%%%%%%%%%%%%%%%
%%%%%%%%%%%%%%%%%%%%%%%%%%%%%%%%%%%%%%%%%%%%%%%%%%%%%%%%%%%%%%%

%\begin{ack}
%Place acknowledgments here.
%\end{ack}

%\section*{DECLARATION OF GENERATIVE AI AND AI-ASSISTED TECHNOLOGIES IN THE WRITING PROCESS}
%During the preparation of this work the author(s) used [NAME TOOL / SERVICE] in order to [REASON]. After using this tool/service, the author(s) reviewed and edited the content as needed and take(s) full responsibility for the content of the publication.

%\bibliography{ifacconf}             % bib file to produce the bibliography
                                                     % with bibtex (preferred)

%\appendix
%\section{A summary of Latin grammar}    % Each appendix must have a short title.
%\section{Some Latin vocabulary}              % Sections and subsections are supported  
                                                                         % in the appendices.
\end{document}